\numberwithin{equation}{section}
\renewcommand{\a}{\alpha}
\renewcommand{\b}{\beta}
\renewcommand{\c}{\gamma}
\renewcommand{\d}{\delta}
\newcommand{\e}{\varepsilon}
\newcommand{\f}{\varphi}
\newcommand{\m}{\mu}
\newcommand{\s}{\sigma}
\newcommand{\x}{\xi}
\newcommand{\y}{\eta}
\newcommand{\z}{\zeta}
\newcommand{\co}{\mathbb{C}}
\newcommand{\re}{\mathbb{R}}
\newcommand{\ze}{\mathbb{Z}}
\newcommand{\T}{\mathbb{T}}
\def\pa{\partial}
\newcommand{\supp}{\mathrm{{supp}}}
\DeclareMathOperator*{\esssup}{ess\ sup}
\newcommand{\bigpare}[1]{\bigl(#1\bigr)}
\newcommand{\bigbra}[1]{\bigl\{#1\bigr\}}
\newcommand{\biggbra}[1]{\biggl\{#1\biggr\}}
\newcommand{\bigset}[2]{\bigl\{#1\bigm|#2\bigr\}}
\newcommand{\norm}[1]{\| #1 \|}
\newcommand{\bignorm}[1]{\bigl\| #1 \bigr\|}
\newcommand{\bigabs}[1]{\bigl| #1 \bigr|}
\newcommand{\jap}[1]{\langle #1 \rangle}
\newtheorem{thm}{Theorem}[section]
\newtheorem{lem}[thm]{Lemma}
\newtheorem{cor}[thm]{Corollary}
\theoremstyle{definition}
\newtheorem{ass}{Assumption}
\theoremstyle{remark}
\title{On a continuum limit of discrete Schr\"odinger operators on square lattice}
    \author{Shu N{\sc akamura}\footnote{Graduate School of Mathematical Sciences, the University of Tokyo, 3-8-1 Komaba, Meguro, Tokyo, 153-8914, Japan, E-mail: {\tt shu@ms.u-tokyo.ac.jp}. 
Partially supported by JSPS Grant Kiban-B 15H03622.} 
and Yukihide T{\sc adano}\footnote{Graduate School of Mathematical Sciences, the University of Tokyo, 3-8-1 Komaba, Meguro, Tokyo, 153-8914, Japan, 
E-mail: {\tt tadano@ms.u-tokyo.ac.jp}. Supported by JSPS Research Fellowship for Young Scientists 17J05051.}}
\begin{document}

\maketitle

\begin{abstract}
The norm resolvent convergence of discrete Schr\"odinger operators to a continuum Schr\"odinger 
operator in the continuum limit is proved under relatively weak assumptions. 
This result implies, in particular, the convergence of the spectrum with respect to the Hausdorff distance. 
\end{abstract}

%%%%%%%%%%%%%%%%%%%%%%%% SECTION 1%%%%%%%%%%%%%%%%%%%%%%%%%%
\section{Introduction}

We consider a Schr\"odinger operator 
\[
H = H_0+V(x), \quad H_0=-\triangle, \quad x\in\re^d, 
\]
on $\mathcal{H}=L^2(\re^d)$, where $d\geq 1$, and corresponding discrete Schr\"odinger operators: 
We set $h>0$ be the mesh size, and we write 
\[
\mathcal{H}_h=\ell^2(h\ze^d), \quad 
h \ze^d = \bigset{(hz_1,\dots,hz_d)}{z\in\ze^d}, 
\]
with the norm $\norm{v}_h^2= h^d \sum |v(hz)|^2$ for $v\in\mathcal{H}_h$. 
We denote the standard basis of $\re^d$ by $e_j=(\d_{ik})_{k=1}^d\in\re^d$, $j=1,\dots,d$.  
Our discrete Schr\"odinger operator is 
\[
H_h =H_{0,h} +V(z), \quad z\in h\ze^d, 
\]
where 
\[
H_{0,h} v(z)=h^{-2}\sum_{j=1}^d ( 2v(z) - v(z+he_j) - v(z-he_j)), \quad v\in\mathcal{H}_h. 
\]

We suppose 

\begin{ass}\label{ass-V}
$V$ is a real-valued continuous function on $\re^d$, and bounded from below. 
$(V(x)+M)^{-1}$ is uniformly continuous with some $M>0$, and there is $c_1>0$ such that 
\[
c_1^{-1} (V(x)+M) \leq V(y)+M\leq c_1(V(x)+M), \quad \text{if }|x-y|\leq 1. 
\]
\end{ass}

The above assumption implies $V$ is slowly varying in some sense, and uniformly continuous 
relative to the size of $V(x)$. Under the assumption, $H$ is essentially self-adjoint, and 
$H_h$ is self-adjoint. 
The assumption is satisfied if $V$ is bounded and uniformly continuous. 
$V(x)=a\jap{x}^\m$ with $a,\m>0$, also satisfies the assumption.

For $\f\in \mathcal{S}(\re^d)$, $h>0$ and $z\in h\ze^d$, we set 
\[
\f_{h,z}(x)= \f(h^{-1}(x-z)), \quad x\in \re^d, 
\]
and we define $P_h = P_{h,\f}: \mathcal{H} \to \mathcal{H}_h$ by
\[
P_h u(z) := h^{-d} \int_{\re^d} \overline{\f_{h,z}(x)} u(x) dx, \quad h > 0,\ z \in h\ze^d. 
\]
The adjoint operator is given by
\[
P_h^* v(x) = \sum_{z\in h\ze^d} \varphi_{h,z}(x) v(z) , \quad h > 0, \ v\in\mathcal{H}_h.
\]
It is easy to observe that $P_h^*$ is an isometry and hence $P_h$ is an orthogonal projection 
if and only if $\bigbra{\f_{1,z}\,|\,z\in\ze^d}$ is an orthonormal system. 
This condition is also equivalent to the condition:
\begin{equation}\label{eq-orthonormal-condition}
\sum_{n\in\ze^d} \bigabs{\hat \f(\x+n)}^2 =1\quad \text{for }\x\in\re^d, 
\end{equation}
where $\hat\f$ is the Fourier transform:
\[
\hat \f(\x) =\mathcal{F}\f(\x)= \int_{\re^d}e^{-2\pi i x\cdot\x}\f(x)dx, \quad \x\in\re^d. 
\]
This claim is well-known, but we give its proof in Appendix for the completeness (Lemma~\ref{lem-orthonormal}). 
By this observation, we learn that there is a large class of $\f$'s satisfying the above 
condition. In this paper, we use $P_h$ to identify $\mathcal{H}_h$ with a subspace of 
$\mathcal{H}$. We suppose:

\begin{ass}\label{ass-phi}
$\f$ satisfies the condition \eqref{eq-orthonormal-condition}, and 
$\supp[\hat\f]\subset (-1,1)^d$. 
\end{ass}

%%%%%%%%%% MAIN THEOREM %%%%%%%%%%%%%%
\begin{thm}\label{thm-main}
Suppose Assumptions~\ref{ass-V} and \ref{ass-phi}. Then, for any $\m\in \co \backslash \re$,
\[
\| P_h^*(H_h- \m)^{-1}P_h - (H-\m)^{-1}\|_{\mathcal{B}(\mathcal{H})} \to 0
\quad \text{as }h\to0.
\]
Furthermore, if $(V(x)+M)^{-1}$ is uniformly H\"older continuous of order $\a\in (0,1]$ (with some $M>0$), 
then for any $0<\b<\a$, 
\[
\| P_h^*(H_h- \m)^{-1}P_h - (H-\m)^{-1}\|_{\mathcal{B}(\mathcal{H})} \leq C_\m h^{\b}
\quad \text{as }h\to 0. 
\]
\end{thm}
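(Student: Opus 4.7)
My approach is a resolvent identity combined with Fourier analysis adapted to the lattice and with the slow-variation hypothesis on $V$. Set $J_h := P_h^*$, so $\Pi_h := P_h^* P_h = J_h J_h^*$ is the orthogonal projection onto $\Ran J_h$. Starting from the algebraic identity $(H-\mu) J_h - J_h(H_h - \mu) = HJ_h - J_h H_h$ and applying $(H-\mu)^{-1}$ on the left and $(H_h-\mu)^{-1} J_h^*$ on the right, I will derive
\[
P_h^*(H_h - \mu)^{-1} P_h - (H - \mu)^{-1} = -(H-\mu)^{-1}(I - \Pi_h) + (H-\mu)^{-1}\bigpare{H P_h^* - P_h^* H_h}(H_h - \mu)^{-1} P_h.
\]
Since $\|P_h\| = 1$ and $(H_h-\mu)^{-1}$ is bounded uniformly in $h$, the theorem reduces to the two norm estimates $\bignorm{(H-\mu)^{-1}(I - \Pi_h)} \to 0$ and $\bignorm{(H-\mu)^{-1}(HP_h^* - P_h^* H_h)(H_h-\mu)^{-1}} \to 0$ as $h \to 0$, together with appropriate $h^\beta$ quantitative versions under the H\"older hypothesis.

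\textbf{Projection error.} A direct Fourier computation gives
\[
\widehat{\Pi_h u}(\xi) = \hat\varphi(h\xi)\sum_{n \in \ze^d}\overline{\hat\varphi(h\xi + n)}\,\hat u(\xi + n/h),
\]
and $\supp\hat\varphi \subset (-1, 1)^d$ combined with \eqref{eq-orthonormal-condition} forces $|\hat\varphi(0)| = 1$ and restricts the sum to $n \in \{-1, 0, 1\}^d$. Writing $u = (H-\mu)^{-1} f$ with $\|f\| = 1$, the operator $(H-\mu)^{-1}$ maps $\mathcal{H}$ boundedly into $H^1(\re^d)$ (because $V$ is bounded from below), which provides quantitative low-frequency concentration of $\hat u$. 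This drives both the ``main'' multiplier $1 - |\hat\varphi(h\xi)|^2$ and the aliasing contributions $\hat u(\xi + n/h)$ with $n \neq 0$ to $0$ in operator norm as $h \to 0$; the H\"older regularity of $\hat\varphi$ at the origin, available from $\f \in \mathcal{S}(\re^d)$, provides the explicit rate.

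\textbf{Commutator.} Split $HP_h^* - P_h^* H_h = A_{\mathrm{kin}} + A_{\mathrm{pot}}$, where $A_{\mathrm{kin}} = H_0 P_h^* - P_h^* H_{0,h}$ and
\[
A_{\mathrm{pot}} v(x) = \sum_{z \in h\ze^d}\varphi_{h,z}(x)\bigpare{V(x) - V(z)}v(z).
\]
The kinetic part is a Fourier multiplier with symbol $\bigpare{4\pi^2|\xi|^2 - p_h(\xi)}\hat\varphi(h\xi)$, where $p_h(\xi) = 4h^{-2}\sum_j \sin^2(\pi h\xi_j)$; on the support $h\xi \in (-1,1)^d$, Taylor expansion gives $|4\pi^2|\xi|^2 - p_h(\xi)| \leq C h^2|\xi|^4 = h^2|\xi|^2\cdot|\xi|^2$, and absorbing each factor of $|\xi|^2$ into the adjacent resolvent via $-\triangle \leq H + M$ and $H_{0,h} \leq H_h + M$ produces an $O(h^2)$ contribution. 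For $A_{\mathrm{pot}}$, Assumption~\ref{ass-V} yields the pointwise bound $|V(x) - V(z)| \leq c_1 \omega(|x-z|)(V(x)+M)(V(z)+M)$ with $\omega$ a modulus of continuity of $(V+M)^{-1}$, while the Schwartz decay of $\varphi$ localizes $\varphi_{h,z}(x)$ to $|x - z| \lesssim h$ up to a rapidly decaying tail. Distributing the two $(V+M)$-factors so that one is absorbed against $\|(V+M)^{1/2}(H_h-\mu)^{-1}\| \leq C$ and the other against $\|(H-\mu)^{-1}(V+M)^{1/2}\| \leq C$ (both bounded uniformly in $h$ via the operator-monotone inequality $(V+M)^{1/2} \leq (H_h+M)^{1/2}$) produces the bound $\omega(h)\to 0$. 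Under H\"older continuity of $(V+M)^{-1}$ one has $\omega(t) \lesssim t^\alpha$; optimizing the cut-off between the near-diagonal region $|x-z| \leq h^{1-\eta}$ and the polynomial tail of $\varphi$ then yields $h^\beta$ for any $\beta<\alpha$.

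\textbf{Main obstacle.} The delicate step is $A_{\mathrm{pot}}$ when $V$ is unbounded: the H\"older bound on $(V+M)^{-1}$ naturally produces a factor $(V+M)^2$, so one must carefully split the $V$-weights between the two resolvents so that every $V+M$ factor is absorbed by the square root of a resolvent. The slow-variation hypothesis in Assumption~\ref{ass-V} is what makes this possible; without it the commutator would not close in operator norm, and the small loss from $\alpha$ to $\beta<\alpha$ in the H\"older rate is the price paid for this distribution of weights together with the Schwartz-tail cut-off.
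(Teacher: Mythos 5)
Your overall architecture coincides with the paper's: the same resolvent identity, the same Fourier treatment of $1-P_h^*P_h$ and of the kinetic commutator, and for the potential your pointwise bound $|V(x)-V(z)|\leq \omega(|x-z|)(V(x)+M)(V(z)+M)$ is exactly the paper's device of trading the commutator $[V,P_h^*]$ for the commutator of $P_h^*$ with the bounded uniformly continuous function $(V+M)^{-1}$, which is then handled by a Schur estimate with a cut-off at $|x-z|\sim h^{\gamma}$. Up to that point the proposal is sound.

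The genuine gap is in the absorption of the unbounded weights into the resolvents. Your bound on $A_{\mathrm{pot}}$ produces a total weight $(V(x)+M)(V(z)+M)$, i.e.\ a \emph{full} power of $V+M$ on each side, so closing the estimate requires the operator bounds $\norm{(V+M)(H-\mu)^{-1}}\leq C$ and $\norm{(V+M)(H_h-\mu)^{-1}}\leq C$ uniformly in $h$; likewise the kinetic error $h^2|\xi|^4=h^2\cdot|\xi|^2\cdot|\xi|^2$ requires $\norm{H_0(H-\mu)^{-1}}$ and $\norm{H_{0,h}(H_h-\mu)^{-1}}$ bounded uniformly. You only invoke the form inequalities $(V+M)^{1/2}\leq (H_h+M)^{1/2}$ and $-\triangle\leq H+M$, which yield merely $\norm{(V+M)^{1/2}(H_h-\mu)^{-1}}\leq C$, i.e.\ half a power per resolvent; after absorbing these, a factor $(V(x)+M)^{1/2}(V(z)+M)^{1/2}$ (respectively $|\xi|^2$) survives inside the kernel, and for unbounded $V$ the resulting Schur sums diverge. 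The full relative bounds are \emph{not} automatic for $H=-\triangle+V$ with $V\geq 0$ (the cross term $\langle\nabla u,(\nabla V)u\rangle$ can destroy them), and establishing them uniformly in $h$ is precisely where the slow-variation condition $c_1^{-1}(V(x)+M)\leq V(y)+M\leq c_1(V(x)+M)$ does its real work: the paper devotes a whole subsection to this, replacing $V$ by a mollification $\tilde V$ satisfying $|\partial^\alpha\tilde V|\leq C_\alpha\tilde V$ and controlling $W[H,W]$-type commutators with $W=\tilde V^{1/2}$ (and the analogous discrete difference computation for $H_h$). Your ``Main obstacle'' paragraph correctly identifies the $(V+M)^2$ factor as the crux, but the proposed resolution via operator monotonicity does not close it; you need to supply these relative-boundedness lemmas as a separate step before the rest of your argument goes through.
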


Here $\mathcal{B}(X)$ denotes the Banach space of the operators on a Banach space $X$. 
Combining this with the argument of Theorem~VIII.23 (b) in \cite{R-S}, we obtain the 
following corollary. We denote the spectrum of a self-adjoint operator $A$ by $\s(A)$, 
and the spectral projection by $E_A(\Omega)$ for $\Omega\subset\re$.

\begin{cor}
Suppose Assumptions~\ref{ass-V} and \ref{ass-phi}. 
Let $a,b\in\re$, $a<b$, be not in $\s(H)$. Then $a,b \notin\s(H_h)$ for sufficiently small $h$ and
\[
\| P_h^* E_{H_h}((a,b)) P_h - E_H((a,b)) \|_{\mathcal{B}(\mathcal{H})} \to 0 \quad \text{as }h\to0.
\]
\end{cor}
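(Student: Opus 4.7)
The strategy is to transfer the norm resolvent convergence of Theorem~\ref{thm-main} into convergence of spectral projections via smooth functional calculus, then trap $E_{H_h}((a,b))$ between smooth functions that coincide with $\chi_{(a,b)}$ on the relevant spectra.

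First I would upgrade Theorem~\ref{thm-main} from single-point to locally uniform norm convergence on $\co\setminus\re$. The orthonormality of $\{\f_{1,z}\}$ ensured by Assumption~\ref{ass-phi} gives $P_hP_h^*=I_{\mathcal{H}_h}$, so that $(P_h^*(H_h-\m)^{-1}P_h)^n = P_h^*(H_h-\m)^{-n}P_h$ for $n\geq 1$. Combined with the Neumann expansion $(H_h-\m_2)^{-1} = \sum_{n\geq 0}(\m_2-\m_1)^n(H_h-\m_1)^{-n-1}$, which converges in norm uniformly in $h$ for $|\m_2-\m_1| < |\Im\m_1|$ (since $\|(H_h-\m_1)^{-1}\|\leq 1/|\Im\m_1|$), this propagates norm convergence from a single $\m_1$ to a disk about it, and by iteration to every compact subset of $\co\setminus\re$.

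Next I would establish smooth functional calculus convergence: for every $f\in C_c^\infty(\re)$,
\[
\|P_h^* f(H_h) P_h - f(H)\|_{\mathcal{B}(\mathcal{H})} \to 0 \quad \text{as } h\to 0.
\]
Using the Helffer--Sj\"ostrand formula $f(A) = -\pi^{-1}\int_\co\bar\pa\tilde f(z)(A-z)^{-1}\,dxdy$ applied to $H$ and to $H_h$ (the latter sandwiched by $P_h^*$ and $P_h$), where $\tilde f$ is an almost analytic extension with $|\bar\pa\tilde f(z)|\leq C_N|\Im z|^N$, I would split the integral at $|\Im z|=\e$: the near-real part is controlled by $\|(A-z)^{-1}\|\leq 1/|\Im z|$ together with the $|\Im z|^N$ decay of $\bar\pa\tilde f$, while the outer part uses the uniform resolvent convergence from the first step.

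To rule out accumulation of $\s(H_h)$ at $a$ (and symmetrically at $b$), suppose for contradiction $h_n\to 0$ and $\l_n\in\s(H_{h_n})$ with $\l_n\to a$. Pick $f\in C_c^\infty(\re)$ with $f(a)=1$ and $\supp f\cap\s(H)=\emptyset$, possible since $a\notin\s(H)$; then $f(H)=0$, and the previous step together with the identity $f(H_h) = P_h(P_h^*f(H_h)P_h)P_h^*$ yields $\|f(H_{h_n})\|\to 0$. A Weyl sequence $v_n\in\mathcal{H}_{h_n}$ of unit norm with $\|(H_{h_n}-\l_n)v_n\|\to 0$ satisfies $\|(f(H_{h_n})-f(\l_n))v_n\|\leq \|f'\|_\infty\|(H_{h_n}-\l_n)v_n\|\to 0$, whence $\|f(H_{h_n})v_n\|\to |f(a)|=1$, a contradiction. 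Choosing $\eta>0$ so small that $[a-\eta,a+\eta]\cup[b-\eta,b+\eta]$ avoids $\s(H)$ and $\s(H_h)$ for all small $h$, and $f\in C_c^\infty(\re)$ with $f\equiv 1$ on $[a+\eta/2,b-\eta/2]$ and $\supp f\subset(a-\eta/2,b+\eta/2)$, functional calculus then identifies $f(H)=E_H((a,b))$ and $f(H_h)=E_{H_h}((a,b))$; the smooth convergence from the second step is the corollary. The main obstacle is this second step: the Helffer--Sj\"ostrand argument must absorb the conjugation by the partial isometry $P_h$ uniformly in $h$, which is exactly what the identity $P_hP_h^*=I_{\mathcal{H}_h}$ enables.
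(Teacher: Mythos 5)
Your proof is correct and follows essentially the route the paper intends: the paper gives no argument of its own here but defers to Theorem~VIII.23(b) of Reed--Simon, which is precisely the scheme you carry out (functional-calculus convergence, non-accumulation of $\s(H_h)$ at $a$ and $b$, then identification of $E_{(a,b)}$ with a smooth function of the operator), the only cosmetic difference being your use of the Helffer--Sj\"ostrand formula where Reed--Simon use the continuous functional calculus. Your explicit use of $P_hP_h^*=I_{\mathcal{H}_h}$ to turn powers of the compressed resolvent into compressions of powers is exactly the adaptation needed to transplant that argument to the present two-Hilbert-space setting, and is the one point the paper leaves silently to the reader.
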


We denote the Hausdorff distance of sets $X,Y\subset \co$ by 
\[
d_{\text{H}}(X,Y) = \max\biggbra{\sup_{x\in X}d(x,Y),\sup_{y\in Y}d(y,X) }, 
\]
where $d(\cdot,\cdot)$ denotes the standard distance in $\co$. 
It is not difficult to show $d_{\text{H}}(\s(A),\s(B))\leq \norm{A-B}$ for 
normal operators $A$ and $B$ (see Lemma~\ref{lem-hausdorff} in Appendix). 
Thus we also have the following result. 

\begin{cor} 
Suppose Assumptions~\ref{ass-V} and \ref{ass-phi}. Then for $M\gg 0$, 
\[
d_{\text{H}} \left( \s((H_h+M)^{-1}), \s((H+M)^{-1}) \right) \to 0  \quad \text{as }h\to0.
\]
\end{cor}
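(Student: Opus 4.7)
The plan is to combine Theorem~\ref{thm-main} with Lemma~\ref{lem-hausdorff} (in the appendix), which provides the bound $d_{\mathrm{H}}(\sigma(A),\sigma(B))\leq\|A-B\|$ for normal operators. The first step is to upgrade Theorem~\ref{thm-main} from $\mu\in\co\setminus\re$ to the real point $\mu=-M$ for $M$ sufficiently large that $-M\notin\sigma(H)\cup\sigma(H_h)$ uniformly in $h$ (e.g., $M>-\inf V$). The key algebraic observation is that since $P_h^*$ is an isometry, $P_hP_h^*=I_{\mathcal{H}_h}$, so the operation $X\mapsto P_h^*XP_h$ on $\mathcal{B}(\mathcal{H}_h)$ is both contractive and multiplicative: $P_h^*(XY)P_h=(P_h^*XP_h)(P_h^*YP_h)$. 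Therefore, approximating uniformly on $\re$ a $C_0(\re)$-truncation of $\lambda\mapsto(\lambda+M)^{-1}$ (chosen to agree with $(\lambda+M)^{-1}$ on $[\inf V,\infty)\supset\sigma(H)\cup\sigma(H_h)$) by polynomials in $(\lambda\pm i)^{-1}$ via Stone--Weierstrass, and invoking Theorem~\ref{thm-main} at $\mu=\pm i$, yields
\[
\|P_h^*(H_h+M)^{-1}P_h-(H+M)^{-1}\|_{\mathcal{B}(\mathcal{H})}\to 0\quad\text{as }h\to 0.
\]

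Set $T_h:=P_h^*(H_h+M)^{-1}P_h$; both $T_h$ and $(H+M)^{-1}$ are self-adjoint, so Lemma~\ref{lem-hausdorff} gives $d_{\mathrm{H}}(\sigma(T_h),\sigma((H+M)^{-1}))\to 0$. To pass from $\sigma(T_h)$ to $\sigma((H_h+M)^{-1})$, I would decompose $\mathcal{H}=\mathrm{Ran}(P_h^*)\oplus\mathrm{Ran}(P_h^*)^\perp$: on the first summand $T_h$ is unitarily equivalent to $(H_h+M)^{-1}$ via the isometry $P_h^*$, while on the orthogonal complement (the kernel of $P_h$) it vanishes. Hence
\[
\sigma(T_h)=\sigma((H_h+M)^{-1})\cup\{0\}.
\]

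It remains to absorb the spurious point $0$. On one hand, $H$ is unbounded above, so $0\in\sigma((H+M)^{-1})$ automatically. On the other hand, $\|H_{0,h}\|=4d/h^2\to\infty$, so a variational argument (using $H_h=H_{0,h}+V$ when $V$ is bounded, or directly the unboundedness of $V$ when $V$ is unbounded above) gives $\sup\sigma(H_h)\to\infty$, hence $\min\sigma((H_h+M)^{-1})\to 0$. A short triangle-inequality argument, splitting $\sigma((H+M)^{-1})$ into its intersection with $[\e,\infty)$ and with $[0,\e)$ for small $\e>0$, then yields $d_{\mathrm{H}}(\sigma((H_h+M)^{-1}),\sigma((H+M)^{-1}))\to 0$. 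The main obstacle is the first step: extending Theorem~\ref{thm-main} to the real point $\mu=-M$. The Stone--Weierstrass approach succeeds precisely because the isometry property $P_hP_h^*=I_{\mathcal{H}_h}$ makes the embedding multiplicative, but one must handle the $C_0(\re)$-truncation with some care, since $\lambda\mapsto(\lambda+M)^{-1}$ has a pole at $-M$ and thus does not itself lie in $C_0(\re)$.
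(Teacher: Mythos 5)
Your proposal is correct, and it is in fact considerably more detailed than what the paper offers: the authors give no proof at all beyond the remark that Lemma~\ref{lem-hausdorff} applies, implicitly taking $A=P_h^*(H_h+M)^{-1}P_h$ and $B=(H+M)^{-1}$ and invoking Theorem~\ref{thm-main}. The one place where your route genuinely diverges is the extension to the real point $\mu=-M$. You reach it by a Stone--Weierstrass/functional-calculus argument, exploiting that $P_hP_h^*=I_{\mathcal{H}_h}$ makes $X\mapsto P_h^*XP_h$ a contractive homomorphism; this is valid (the algebra generated by $(\lambda\pm i)^{-1}$ is dense in $C_0(\mathbb{R})$, and a truncation of $(\lambda+M)^{-1}$ agreeing with it on $[\inf V,\infty)$ reproduces the resolvents of $H$ and $H_h$ exactly). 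The paper's intended shortcut is more direct: Lemmas~\ref{lem-free-1} and \ref{lem-free-2} are already stated for $\mu\in\mathbb{C}\setminus\mathbb{R}_+$, and the whole proof of Theorem~\ref{thm-main} runs verbatim at $\mu=-M$ for $M$ large, since $-M$ lies in the resolvent sets of $H$ and $H_h$ uniformly in $h$; so no functional calculus is needed. Your approach costs a detour but buys generality (it would upgrade convergence of resolvents at a single nonreal point to convergence of $P_h^*f(H_h)P_h$ for all $f\in C_0(\mathbb{R})$). The other point you raise --- that $\sigma(P_h^*(H_h+M)^{-1}P_h)=\sigma((H_h+M)^{-1})\cup\{0\}$, so the spurious eigenvalue $0$ on $\ker P_h$ must be absorbed using $0\in\sigma((H+M)^{-1})$ and $\sup\sigma(H_h)\to\infty$ --- is a genuine gap in the paper's one-line deduction that you correctly identify and repair; your argument for it (unboundedness of $H$, and either unboundedness of $V$ or $\|H_{0,h}\|=4d/h^2\to\infty$) is sound.
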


There are studies concerning continuum limits of NLS equations, in many cases, mainly with 
applications to numerical analysis. 
We refer Bambusi and Penati \cite{B-P}, Hong and Yang \cite{H-Y} and references therein.
For linear discrete Schr\"odinger operators, Rabinovich \cite{R} has studied the relation between the essential and discrete spectra of the discrete and continuum Schr\"odinger operators, provided $V$ is bounded and uniformly continuous.

In Section~2, we give the proof of our main theorem, and proofs of several technical lemmas are given in Appendix.

%%%%%%%%%%%%%%%%%%%%% SECTION %%%%%%%%%%%%%%%%%%%%%%%

\section{Proof}\label{section-Proof}

We denote the discrete Fourier transform $F_h$ : $\mathcal{H}_h \to \hat{\mathcal{H}}_h =L^2(h^{-1}\T^d)$, 
$\T=\re/\ze$, by 
\[
F_h v(\z) = h^{d} \sum_{z\in h\ze^d} e^{-2\pi iz\cdot \z}v(z), \quad \z \in h^{-1}\T^d, \  v\in\mathcal{H}_h. 
\]
$F_h$ is unitary, and its adjoint is given by 
\[
F_h^*g(z) = \int_{h^{-1}\T^d} e^{2\pi i z\cdot\z} g(\z)d\z, \quad z\in h\ze^d, \  g\in\hat{\mathcal{H}}_h. 
\]

%%% Subsection Free Hamiltonian %%%
\subsection{Convergence of the free Hamiltonian} 

If we set $H_0(\x)= |2\pi \x|^2$, it is well-known that
$H_0 = \mathcal{F}^* H_0(\cdot)\mathcal{F}$ on $\mathcal{H}$. 
Similarly, if we set 
\[
H_{0,h}(\z) = 2 h^{-2} \sum_{j=1}^d  (1-\cos(2\pi h\z_j)), \quad \z\in h^{-1}\T^d, 
\]
then $H_{0,h} = F_h^*H_{0,h}(\cdot)F_h$. We denote
\[
Q_h := F_h P_h \mathcal{F}^*: \hat{\mathcal{H}} \to \hat{\mathcal{H}}_h.
\]
The following formula is convenient in the following argument. 
It is well-known in signal analysis (see, e.g., \cite{OS}), but we give a proof in Appendix for 
the completeness.

\begin{lem}\label{lem-Q} 
For $f\in\mathcal{S}(\re^d)$,  
\begin{equation}\label{eq-Q1}
Q_h f(\z) = \sum_{n\in\ze^d} \overline{\hat{\f}(h\z+n)} f(\z+ h^{-1}n), \quad \z\in h^{-1}\T. 
\end{equation}
For $g\in\hat{\mathcal{H}}_h$, 
\begin{equation}\label{eq-Q2}
Q_h^* g(\x) = \hat{\f}(h\x) \tilde g(\x), \quad \x\in\re^d,
\end{equation}
where $\tilde g$ is the periodic extension of $g$ on $\re^d$.
\end{lem}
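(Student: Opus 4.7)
The plan is to verify both formulas by direct computation, unwinding the definitions of $F_h$, $P_h$, $\mathcal{F}$ and their adjoints, and applying the Poisson summation formula to collapse sums over $h\ze^d$ into sums of delta masses on the dual lattice $h^{-1}\ze^d$.

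For \eqref{eq-Q1}, I would start from $\mathcal{F}^*f(x)=\int e^{2\pi i x\cdot\x}f(\x)\,d\x$, substitute $y=h^{-1}(x-z)$ inside the defining integral of $P_h\mathcal{F}^*f(z)$, and use Fubini together with the identity $\int\overline{\f(y)}e^{2\pi i hy\cdot\x}\,dy=\overline{\hat\f(h\x)}$. This yields
\[
P_h\mathcal{F}^*f(z)=\int_{\re^d} e^{2\pi i z\cdot\x}\,\overline{\hat\f(h\x)}\,f(\x)\,d\x.
\]
Applying $F_h$ produces the sum $h^d\sum_{z\in h\ze^d} e^{2\pi i z\cdot(\x-\z)}$, which, after interchanging sum and integral, equals $\sum_{n\in\ze^d}\d(\x-\z-h^{-1}n)$ by Poisson summation in the distributional sense. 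Integrating against the delta masses picks out the points $\x=\z+h^{-1}n$ and delivers \eqref{eq-Q1}.

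For \eqref{eq-Q2}, the cleanest route is to derive it from \eqref{eq-Q1} by duality. I would compute
\[
\langle Q_h^*g,f\rangle_{\hat{\mathcal{H}}}=\langle g,Q_hf\rangle_{\hat{\mathcal{H}}_h}=\sum_{n\in\ze^d}\int_{h^{-1}\T^d}\overline{g(\z)\hat\f(h\z+n)}\,f(\z+h^{-1}n)\,d\z,
\]
change variables $\x=\z+h^{-1}n$ in the $n$-th term (noting $h\z+n=h\x$), and observe that the translates $h^{-1}\T^d+h^{-1}n$ tile $\re^d$. Since $g(\x-h^{-1}n)$ equals $\tilde g(\x)$ on the $n$-th tile, the sum reassembles into $\int_{\re^d}\overline{\hat\f(h\x)\tilde g(\x)}f(\x)\,d\x$, which identifies $Q_h^*g(\x)=\hat\f(h\x)\tilde g(\x)$.

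The main technical obstacle is justifying the interchanges of summation and integration and making the Poisson summation rigorous. For \eqref{eq-Q1} with $f\in\mathcal{S}(\re^d)$, the Schwartz decay of $f$ and $\hat\f$ ensures absolute convergence, so the formal manipulations are legitimate. For \eqref{eq-Q2}, it suffices to establish the identity for $g$ in a dense subset of $\hat{\mathcal{H}}_h=L^2(h^{-1}\T^d)$ (say, finite trigonometric polynomials, where $\tilde g$ is a genuine periodic function and all sums terminate), and then extend to general $g\in\hat{\mathcal{H}}_h$ by continuity once one checks that the right-hand side $\hat\f(h\x)\tilde g(\x)$ depends continuously on $g$ in the appropriate topology.
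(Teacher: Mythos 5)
Your proposal is correct and follows essentially the same route as the paper: both prove \eqref{eq-Q1} by unwinding the definitions to reach $P_h\mathcal{F}^*f(z)=\int e^{2\pi i z\cdot\x}\,\overline{\hat\f(h\x)}\,f(\x)\,d\x$ and then collapsing the lattice sum over $z$ --- you via distributional Poisson summation, the paper by folding $\int_{\re^d}$ into the fundamental domain $h^{-1}\T^d$ and applying the Fourier inversion formula for $F_h$, which is the same fact in different clothing. Your duality argument for \eqref{eq-Q2}, including the change of variables and the tiling of $\re^d$ by translates of $h^{-1}\T^d$, is exactly the paper's computation.
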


\begin{lem}\label{lem-free-1}
For $\m\in\co\setminus\re_+$ there is $C>0$ such that
\[
\| (1-P_h^* P_h)(H_0-\m)^{-1} \|_{\mathcal{B}(\mathcal{H})} \leq C h^2, \quad h>0. 
\]
\end{lem}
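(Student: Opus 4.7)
My plan is to conjugate with the Fourier transform $\mathcal{F}$, write the squared residual of the orthogonal projection $1-Q_h^*Q_h$ via a Lagrange identity, and then use the strict containment of $\supp[\hat\f]$ inside the open cube $(-1,1)^d$ to show that the residual is localized at frequencies $|\x|\geq \d/h$, where $(H_0-\m)^{-1}$ is of size $O(h^2)$.

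After conjugation, the claim becomes $\|(1-Q_h^*Q_h)(H_0(\cdot)-\m)^{-1}\|_{\mathcal{B}(\hat{\mathcal{H}})}\leq Ch^2$ with $H_0(\x)=|2\pi\x|^2$ and $Q_h^*Q_h$ an orthogonal projection, so $\|(1-Q_h^*Q_h)u\|^2=\|u\|^2-\|Q_hu\|^2$. Both norms unfold as integrals over $h^{-1}\T^d$: for $\|Q_hu\|^2$ directly via Lemma~\ref{lem-Q}, and for $\|u\|^2=\int_{h^{-1}\T^d}\sum_n|u(\z+h^{-1}n)|^2 d\z$ after multiplying by $1=\sum_m|\hat\f(h\z+m)|^2$ from \eqref{eq-orthonormal-condition}. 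The Lagrange identity
\[
\sum_n|a_n|^2\sum_m|c_m|^2-\bigabs{\sum_n\overline{c_n}a_n}^2 = \tfrac{1}{2}\sum_{n,m}|a_nc_m-a_mc_n|^2
\]
applied to $a_n=u(\z+h^{-1}n)$ and $c_n=\hat\f(h\z+n)$, together with the unfolding $\x=\z+h^{-1}n$ and relabelling $k=m-n$, yields
\[
\|(1-Q_h^*Q_h)u\|^2 = \tfrac{1}{2}\sum_{k\neq 0}\int_{\re^d}\bigabs{u(\x)\hat\f(h\x+k)-u(\x+h^{-1}k)\hat\f(h\x)}^2d\x.
\]
Using $|a-b|^2\leq 2|a|^2+2|b|^2$, substituting $\x\mapsto\x-h^{-1}k$ in the piece $|u(\x+h^{-1}k)\hat\f(h\x)|^2$, and collapsing the $k$-sum by orthonormality once more gives
\[
\|(1-Q_h^*Q_h)u\|^2 \leq 2\int_{\re^d}|u(\x)|^2\bigl(1-|\hat\f(h\x)|^2\bigr)d\x.
\]

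The crux is the following localization. Because $\hat\f\in\mathcal{S}(\re^d)$ has compact support inside the \emph{open} cube $(-1,1)^d$, there is $\d\in(0,1)$ with $\supp[\hat\f]\subset[-1+\d,1-\d]^d$. For $|\y|_\infty<\d$ and any $k\in\ze^d\setminus\{0\}$, some component of $\y+k$ has magnitude strictly greater than $1-\d$, so $\y+k\notin\supp[\hat\f]$; orthonormality rewritten as $1-|\hat\f(\y)|^2=\sum_{k\neq 0}|\hat\f(\y+k)|^2$ then forces $1-|\hat\f(\y)|^2=0$ whenever $|\y|_\infty<\d$. The weight in the integral above is therefore supported in $\{|\x|_\infty\geq\d/h\}$; on this set $|H_0(\x)-\m|^{-2}\leq C_\m h^4$ for all sufficiently small $h$, so substituting $u=(H_0-\m)^{-1}f$ yields $|u(\x)|^2\leq C_\m h^4|f(\x)|^2$ pointwise on the integration set and hence $\|(1-Q_h^*Q_h)(H_0-\m)^{-1}f\|^2\leq C_\m h^4\|f\|^2$. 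The remaining range $h\geq h_0$ is absorbed by the trivial bound $\|(H_0-\m)^{-1}\|\leq 1/\dist(\m,\re_+)$ after enlarging the constant. The main obstacle is precisely this localization step, which depends essentially on the \emph{strict} containment of $\supp[\hat\f]$ inside $(-1,1)^d$ that Assumption~\ref{ass-phi} provides—without that margin the error term would carry nontrivial low-frequency content and the resolvent decay would not be available to absorb it.
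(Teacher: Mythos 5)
Your proof is correct, but it organizes the estimate differently from the paper. The paper works pointwise: using Lemma~\ref{lem-Q} it expands $(1-Q_h^*Q_h)g(\x)$ into the diagonal multiplier term $(1-|\hat\f(h\x)|^2)g(\x)$ plus the finitely many off-diagonal translates $\hat\f(h\x)\overline{\hat\f(h\x+n)}g(\x+h^{-1}n)$ with $n\in\{0,\pm1\}^d\setminus\{0\}$, and estimates each piece separately, in both cases by observing that the relevant coefficient is supported where the resolvent symbol is $O(h^2)$. You instead exploit the fact that $Q_h^*Q_h$ is an orthogonal projection to write the squared norm of the defect exactly, via the Lagrange identity, and then collapse all off-diagonal contributions back into the single diagonal weight $1-|\hat\f(h\x)|^2$ using \eqref{eq-orthonormal-condition}; the localization step ($1-|\hat\f(\y)|^2=\sum_{k\neq0}|\hat\f(\y+k)|^2=0$ for $|\y|_\infty<\d$) is then the same high-frequency mechanism the paper uses. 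Your route is arguably tidier --- there is no need to enumerate which $n$ contribute, and everything reduces to one scalar weight --- but it leans essentially on the exact orthonormality (the projection property), so it would not survive a relaxation of \eqref{eq-orthonormal-condition} to a frame condition, and unlike the paper's pointwise decomposition it does not carry over directly to the companion estimate in Lemma~\ref{lem-free-2}, where the same expansion is reused with the symbol difference $B_h$ inserted. Both arguments hinge on the identical key fact that the strict containment $\supp[\hat\f]\subset(-1,1)^d$ pushes the entire error to frequencies $|\x|\gtrsim\d/h$.
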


\begin{proof}
We first note 
\[
\|(1-P_h^* P_h)(H_0-\m)^{-1}\|_{\mathcal{B}(\mathcal{H})} = \|(1-Q_h^* Q_h)(|2\pi \x|^2-\m)^{-1}\|_{\mathcal{B}(\mathcal{\hat{H}})},
\]
where $\hat{\mathcal{H}}=\mathcal{F}[\mathcal{H}]=L^2(\re^d)$. 
Let $f \in \hat{\mathcal{H}}$ and $g=(|2\pi \x|^2-\m)^{-1} f$. Then we have, by using the above lemma, 
\[
(1-Q_h^* Q_h)g(\x)
 = (1-|\hat{\varphi}(h\x)|^2)g(\x) 
 - \hat{\varphi}(h\x) \sum_{n \neq 0} \overline{\hat{\varphi}(h\x+n)} g(\x+h^{-1}n).
\]
For the first term in the right hand side, we observe by Assumption~\ref{ass-phi} that 
$|\hat{\varphi}(h\x)|=1$ if $|\x|\leq h^{-1}\d$ with some $\d>0$. Then we learn 
\[
\| (1-|\hat{\varphi}(h\x)|^2)g(\x) \|_{\hat{\mathcal{H}}} 
\leq \sup_{|\x| > h^{-1}\d} \left||2\pi \x|^2-\m\right|^{-1} \|f\|_{\hat{\mathcal{H}}} 
\leq Ch^2 \|f\|_{\hat{\mathcal{H}}} .
\]
For the second term, we note that the terms in the summation vanish except for $n\in \{0,\pm 1\}^d\setminus 0$. 
Using the support condition of $\hat{\f}$ again, we learn that 
$\hat\f(h\x) \overline{\hat\f(h\x+n)}=0$ if $|\x+h^{-1}n|\leq h^{-1}\d$ with some  $\d>0$. 
Thus we can use the same argument to show that the second term is bounded by $Ch^2$.
\end{proof}

\begin{lem}\label{lem-free-2}
For $\m\in\co\setminus\re_+$ there is $C>0$ such that
\[
\bignorm{(H_{0,h}-\m)^{-1}P_h - P_h(H_0-\m)^{-1}}_{\mathcal{B}(\mathcal{H},\mathcal{H}_h)} \leq Ch^2, 
\quad h>0. 
\]
\end{lem}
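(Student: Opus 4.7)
The plan is to transport the estimate to Fourier variables. Because both $\mathcal{F}$ and $F_h$ are unitary, the norm to be bounded equals that of
\[
T := (H_{0,h}(\cdot)-\m)^{-1}Q_h - Q_h(|2\pi\cdot|^2-\m)^{-1} : \hat{\mathcal{H}} \to \hat{\mathcal{H}}_h.
\]
Applying Lemma~\ref{lem-Q} together with the periodicity $H_{0,h}(\zeta+h^{-1}n)=H_{0,h}(\zeta)$, and substituting $\xi=\zeta+h^{-1}n$ in the sum coming from (\ref{eq-Q1}), I would rewrite
\[
(Tf)(\zeta) = \sum_{n\in\ze^d} K(\zeta+h^{-1}n)\, f(\zeta+h^{-1}n),
\]
with explicit multiplier
\[
K(\xi) := \overline{\hat\varphi(h\xi)}\cdot \frac{|2\pi\xi|^2-H_{0,h}(\xi)}{(H_{0,h}(\xi)-\m)(|2\pi\xi|^2-\m)}.
\]

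A short duality computation (using $T^*g(\xi)=\overline{K(\xi)}\,\tilde g(\xi)$, in the spirit of (\ref{eq-Q2})), combined with folding the resulting $\re^d$-integral back to the fundamental domain $h^{-1}\T^d$, gives the Schur-type estimate
\[
\|T\|^2 = \|T^*\|^2 \leq \esssup_{\zeta}\sum_{n\in\ze^d}|K(\zeta+h^{-1}n)|^2.
\]
In view of the orthonormal condition \eqref{eq-orthonormal-condition}, everything then reduces to the pointwise bound $|K(\xi)|\leq C_\m h^2|\hat\varphi(h\xi)|$.

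To prove this bound, the key structural fact I would use is that $\supp\hat\varphi$, being closed inside the bounded open cube $(-1,1)^d$, is compact and hence contained in $[-a,a]^d$ for some $a<1$. Therefore, on $\supp\hat\varphi(h\cdot)$ one has $|h\xi_j|\leq a<1$ for every $j$, which yields two simultaneous estimates. The Taylor expansion $|t^2-2(1-\cos t)|\leq C_a t^4$ valid on $|t|\leq 2\pi a$ gives
\[
\bigl||2\pi\xi|^2 - H_{0,h}(\xi)\bigr| \leq C h^2|\xi|^4,
\]
while the lower bound $|\sin(\pi t)|\geq c(a)|t|$ on $|t|\leq a$ (valid because $a<1$ keeps $\pi t$ away from the zeros of sine at $\pm\pi$) delivers $H_{0,h}(\xi)\geq c'|\xi|^2$. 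Combined with the elementary inequality $|z-\m|^{-1}\leq C_\m(1+z)^{-1}$ valid for $z\geq 0$ and $\m\in\co\setminus\re_+$, both denominators in $K$ contribute a factor $(1+|\xi|^2)^{-1}$; since $|\xi|^4/(1+|\xi|^2)^2\leq 1$, the claim $|K(\xi)|\leq C_\m h^2|\hat\varphi(h\xi)|$ follows, and summing via \eqref{eq-orthonormal-condition} produces $\|T\|\leq C_\m h^2$.

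I expect the delicate step to be the simultaneous control of both denominators in $K$. The symbol $H_{0,h}(\xi)$ vanishes at the Brillouin-zone corners $h\xi_j=\pm 1$, where on the other hand $|2\pi\xi|^2$ is of size $h^{-2}$; in that region the Taylor bound on the numerator is tight at $O(h^{-2})$ and the lower bound $H_{0,h}\gtrsim|\xi|^2$ would fail. It is precisely the strict containment $\supp\hat\varphi\subset(-1,1)^d$—i.e., the positive gap $1-a$ from these dangerous frequencies—that rescues the estimate; without it the kernel bound would degrade from $O(h^2)$ to $O(1)$.
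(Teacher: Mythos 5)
Your proof is correct and follows essentially the same route as the paper: conjugate by the Fourier transforms, use Lemma~\ref{lem-Q} and the periodicity of $H_{0,h}(\cdot)$ to reduce to a multiplier comparison, then control the numerator by Taylor expansion and the denominators by the lower bound $H_{0,h}(\xi)\geq c|\xi|^2$ on $\supp[\hat\varphi(h\cdot)]$ --- which is exactly where the strict inclusion $\supp[\hat\varphi]\subset(-1,1)^d$ enters, as you correctly point out. The only difference is bookkeeping: you bound the operator via its adjoint (a multiplication operator on the periodization) and sum the kernel using \eqref{eq-orthonormal-condition}, which lets you get a single uniform bound $|K(\xi)|\leq C_\m h^2|\hat\varphi(h\xi)|$ and avoid the paper's separate treatment of the $n\neq 0$ aliasing terms.
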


\begin{proof}
Since $P_h^*$ is isometric, it suffices to estimate
\begin{align*}
&\bignorm{(H_{0,h}-\m)^{-1}P_h - P_h(H_0-\m)^{-1}} \\
&\quad = \bignorm{P_h^*(H_{0,h}-\m)^{-1}P_h - P_h^*P_h(H_0-\m)^{-1}} \\
&\quad = \bignorm{Q_h^*(H_{0,h}(\cdot)-\m)^{-1}Q_h - Q_h^* Q_h(H_0(\cdot)-\m)^{-1}}. 
\end{align*}
Then we compute, for $f\in\mathcal{S}(\re^d)$, 
\begin{align*}
&\bigpare{ Q_h^*(H_{0,h}(\cdot)-\m)^{-1} Q_h - Q_h^*Q_h (H_0(\cdot)-\m)^{-1} } f(\x) \\
&\quad =\sum_{n\in\ze^d} \hat\f(h\x) \overline{\hat{\f}(h\x+n)}B_h(\x+h^{-1}n) f(\x+ h^{-1}n), 
\end{align*}
where $B_h(\x):= (H_{0,h}(\x)-\m)^{-1} - (H_0(\x)-\m)^{-1}$. 
We note, as well as in the proof of Lemma~\ref{lem-free-1}, $\hat\f(h\x) \overline{\hat{\f}(h\x-n)}$ 
vanishes except for $n\in \{0,\pm1\}^d$. 

By the Taylor expansion, we have 
\[
\bigabs{H_{0,h}(\x) -H_0(\x)}\leq Ch^{-2}(h|\x|)^4 = Ch^2|\x|^4, \quad h>0, \ \x\in\re^d. 
\]
On the other hand, if $h\x\in\supp[\hat\f]$, we have 
$H_{0,h}(\x)\geq c_0|\x|^2$  with some $c_0>0$. 
These imply 
\[
|\hat\f(h\x)|^2 |B_h(\z)| \leq C h^2 |\hat\f(h\x)|^2, \quad h>0, \ \x\in\re^d, 
\]
with some $C>0$. 
On the support of $\hat\f(h\x) \overline{\hat{\f}(h\x+n)}$, $n\neq 0$, 
we have $H_{0,h}(\x+h^{-1}n)\geq c_1 h^{-2}$, $H_0(\x+h^{-1}n)\geq c_1h^{-2}$ with some $c_1>0$, 
and hence $|B_h(\x)|=O(h^2)$ as $h\to 0$. 
Combining these, we learn 
\begin{align*}
&\bigabs{\bigpare{ Q_h^*(H_{0,h}(\cdot)-\m)^{-1} Q_h - Q_h^*Q_h (H_0(\cdot)-\m)^{-1} } f(\x)}\\
&\quad \leq Ch^2 \sum_{n\in\{0,\pm1\}^d} |f(\x+h^{-1}n)|, \quad \x\in\re^d, 
\end{align*}
and the assertion follows. 
\end{proof}

%%% Subsection Relative boundedness 
\subsection{Relative boundedness}

In this section, we suppose $V\geq 1$ without loss of generality. 
In particular, $V(x)^{-1}$ is uniformly bounded, and 
\begin{equation}\label{eq-cond-V-2}
c_1^{-1} V(x) \leq V(y)\leq c_1 V(x) \quad \text{for } x,y\in\re^d,\ |x-y|\leq 1.
\end{equation}

\begin{lem}\label{lem-rel-bound-1}
Suppose Assumption~\ref{ass-V}. Then $V$ is $H$-bounded, and hence $H_0$ is also $H$-bounded. 
\end{lem}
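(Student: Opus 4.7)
The plan is to reduce the claim to a Kato-type inequality
\[
\norm{Vu}\leq a\norm{Hu}+b\norm{u}\quad\text{for all }u\in C_c^\infty(\re^d),
\]
with constants $a,b$ depending only on $V$. Since $H$ is essentially self-adjoint on $C_c^\infty(\re^d)$, this bound extends by density to $\dom(H)$, yielding $\dom(H)\subset\dom(V)$ and the $H$-boundedness of $V$; the relation $H_0 u=Hu-Vu$ then immediately gives the $H$-boundedness of $H_0$. As noted in the excerpt, I may assume $V\geq 1$, hence $H\geq 1$.

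The argument proceeds by freezing $V$ on pieces of a partition of unity. Choose cutoffs $\chi_n\in C_c^\infty(\re^d)$ with $\sum_n\chi_n^2\equiv 1$ and $\supp\chi_n\subset B(y_n,r_n)$, where the scales $r_n\leq 1$ are chosen small enough---exploiting the uniform continuity of $g:=(V+M)^{-1}$---that
\[
\abs{V(x)-V(y_n)}\leq\eta V(y_n)\quad\text{on }\supp\chi_n
\]
for a fixed $\eta\in(0,1)$. The slow-variation inequality in Assumption~\ref{ass-V} prevents the scales $r_n$ from degenerating on unit balls. Locally, compare $H$ to the constant-potential operator $H_n:=H_0+V(y_n)$; the elementary identity
\[
\norm{H_n w}^2=\norm{H_0 w}^2+2V(y_n)\jap{H_0 w,w}+V(y_n)^2\norm{w}^2
\]
together with $H_0\geq 0$ gives $V(y_n)\norm{w}\leq\norm{H_n w}$ for every $w\in\dom(H_0)$. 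Applied to $w=\chi_n u$, and combined with the expansion
\[
H_n\chi_n u=\chi_n Hu+(V(y_n)-V)\chi_n u+[H_0,\chi_n]u,
\]
the freezing error $(V(y_n)-V)\chi_n u$ is bounded by $\eta V(y_n)\norm{\chi_n u}$ and can be absorbed to the left, yielding
\[
V(y_n)\norm{\chi_n u}\leq(1-\eta)^{-1}\bigpare{\norm{\chi_n Hu}+\norm{[H_0,\chi_n]u}}.
\]
Squaring, summing over $n$, and using $\sum_n\norm{\chi_n Hu}^2=\norm{Hu}^2$ together with $\norm{V\chi_n u}\leq(1+\eta)V(y_n)\norm{\chi_n u}$ reduces matters to
\[
\norm{Vu}^2\leq C\norm{Hu}^2+C\sum_n\norm{[H_0,\chi_n]u}^2.
\]

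The hard part will be controlling the commutator sum. Writing $[H_0,\chi_n]=-2\nabla\chi_n\cdot\nabla-\Delta\chi_n$, one obtains a pointwise estimate
\[
\sum_n\norm{[H_0,\chi_n]u}^2\leq C\int_{\re^d}\bigpare{\alpha(x)\abs{\nabla u}^2+\beta(x)\abs{u}^2}\,dx,
\]
with $\alpha:=\sum_n\abs{\nabla\chi_n}^2$ and $\beta:=\sum_n\abs{\Delta\chi_n}^2$. The adaptive scales $r_n$ dictated by the modulus of uniform continuity of $(V+M)^{-1}$ must be arranged so that $\alpha$ and $\beta$ are uniformly bounded, or at worst dominated by an $H$-form-bounded function. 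This is where the full strength of Assumption~\ref{ass-V} genuinely enters: the multiplicative comparability on unit balls prevents scale collapse where $V$ is moderate, while the uniform continuity of $g$ is precisely what makes the smallness $\eta$ of the freezing error achievable at every scale. Once $\alpha,\beta\in L^\infty$, the inequalities $\norm{\nabla u}^2\leq\jap{Hu,u}\leq\norm{Hu}\norm{u}$ together with $\norm{u}\leq\norm{Hu}$ (valid because $H\geq 1$) close the argument and produce the desired constants $a,b$.
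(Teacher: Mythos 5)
There is a genuine gap, and it sits exactly where you flagged ``the hard part'': the commutator sum $\sum_n\norm{[H_0,\chi_n]u}^2$ cannot be controlled with the scales that your freezing step forces on you. To absorb the error $(V(y_n)-V)\chi_n u$ you need $\abs{V(x)-V(y_n)}\leq\eta V(y_n)$ with $\eta<1$ on $\supp\chi_n$; the unit-ball comparability in Assumption~\ref{ass-V} only gives the factor $c_1$, which may well exceed $1+\eta$, so you must shrink $r_n$ using the uniform continuity of $g=(V+M)^{-1}$. Writing $\omega$ for its modulus of continuity, one has $\abs{V(x)-V(y_n)}\leq C(V(y_n)+M)^2\,\omega(\abs{x-y_n})$, so you are forced to take $r_n\lesssim\omega^{-1}\bigpare{c\eta\,(V(y_n)+M)^{-1}}$. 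Even in the paper's model example $V=a\jap{x}^\m$, where $g$ is Lipschitz, this gives $r_n\sim V(y_n)^{-1}$, hence $\alpha=\sum_n\abs{\nabla\chi_n}^2\sim V^2$ and $\beta=\sum_n\abs{\Delta\chi_n}^2\sim V^4$; these are neither in $L^\infty$ nor dominated by $\e V^2+C$, so $\int\beta\abs{u}^2$ cannot be absorbed into $\norm{Vu}^2$ and $\int\alpha\abs{\nabla u}^2$ is not controlled by $\norm{Hu}\norm{u}$. For a general modulus $\omega$ the situation is worse still ($r_n$ may decay faster than any power of $V(y_n)^{-1}$). Your sentence ``Once $\alpha,\beta\in L^\infty$\dots'' therefore assumes precisely what has to be proved, and under Assumption~\ref{ass-V} it is false whenever $V$ is unbounded. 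Choosing instead $r_n\sim V(y_n)^{-1/2}$ to tame the commutators does not help: the freezing error is then only $\leq(c_1-1)V(y_n)\norm{\chi_n u}$ and cannot be absorbed when $c_1\geq2$.

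The paper's proof sidesteps this tension by never freezing $V$ to within a small relative error. It mollifies at \emph{unit} scale, $\tilde V=\y*V$, and uses only the factor-$c_1$ comparability to obtain $c_1^{-1}V\leq\tilde V\leq c_1V$ and $\abs{\pa_x^\a\tilde V}\leq C_\a\tilde V$. With $W=\tilde V^{1/2}$ it then writes $\tilde VH^{-1}=(WH^{-1/2})(WH^{-1/2})^*+WH^{-1}[H,W]H^{-1}$ and observes that $[H,W]$ involves only $\pa_xW$ with $\abs{\pa_xW}\leq CW$, so every factor is a product of $H^{1/2}$-bounded operators ($W$, $\pa_x$ and $\pa_xW$ are all $H^{1/2}$-bounded by the form inequalities $H\geq V\geq c_1^{-1}\tilde V$ and $H\geq H_0$). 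The point you are missing is that what is needed is not a small freezing error but a bound on the logarithmic derivative of a smoothed weight, and that follows from the unit-ball comparability alone; the uniform continuity of $(V+M)^{-1}$ is reserved for Lemma~\ref{lem-V}. As written, your argument does not close.
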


\begin{proof}
By the quadratic inequality, it is easy to observe  $V^{1/2}$ and $(H_0+1)^{1/2}$ are $H^{1/2}$-bounded. 
Let $\y\in C_0^\infty(\re^d)$ be a smooth cut-off function such that $\y(x)\geq 0$, 
$\supp[\y]\subset\{|x|\leq 1\}$ and $\int \y(x)dx =1$. 
Then we set $\tilde V = \y*V$, and we use $\tilde V\geq 1$ as a smooth weight function comparable to $V$. 
By \eqref{eq-cond-V-2}, we have 
\[
c_1^{-1} V(x)\leq \tilde V(x)\leq c_1 V(x), \quad x\in\re^d.
\]
By elementary computation, we also have 
\[
\bigabs{\pa_x^\a \tilde V(x) }\leq C_\a \tilde V(x), \quad x\in\re^d
\]
with some $C_\a>0$, where $\a\in\ze_+^d$. It suffices to show $\tilde V$ is $H$-bounded.  

We write $W(x)=\tilde V(x)^{1/2}\geq 1$, and compute 
\begin{align*}
\tilde VH^{-1} &= WH^{-1}W + W[W,H^{-1}] \\
&= (WH^{-1/2})(WH^{-1/2})^* + W H^{-1}[H,W]H^{-1}.
\end{align*}
The first term in the right hand side is bounded since $W$ is $H^{1/2}$-bounded. 
We note 
\[
[H,W] = -\pa_x\cdot \pa_x W(x)- \pa_x W(x)\cdot \pa_x,
\]
and  $\pa_x$ is $H^{1/2}$-bounded. We also note 
\[
\bigabs{\pa_x W(x)} = \tfrac12 \tilde V^{-1/2}(x) \bigabs{\pa_x \tilde V(x)} \leq CW(x)
\]
with some $C>0$, and hence $\pa_x W$ is $H^{1/2}$-bounded.  Thus we learn 
\begin{align*}
W H^{-1}[H,W]H^{-1} &= (WH^{-1/2})(\pa_x H^{-1/2})^* ((\pa_x W) H^{-1/2})H^{-1/2} \\
&\quad - (WH^{-1/2})((\pa_x W)H^{-1/2} )^*(\pa_x H^{-1/2})H^{-1/2}
\end{align*}
is bounded, and hence $\tilde V$ is $H$-bounded. 
\end{proof}

\begin{lem}\label{lem-rel-bound-2}
Suppose Assumption~\ref{ass-V}. Then $V$ is $H_h$-bounded uniformly in $h>0$, 
and hence $H_{0,h}$ is also $H_h$-bounded uniformly in $h>0$. 
\end{lem}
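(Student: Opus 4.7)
The plan is to mirror the continuum proof of Lemma~\ref{lem-rel-bound-1}, replacing $\pa_x$ by the forward discrete derivative $\nabla_h^j v(z):=h^{-1}(v(z+he_j)-v(z))$, so that $H_{0,h}=\sum_{j=1}^d(\nabla_h^j)^*\nabla_h^j$. Still assuming $V\ge 1$, the quadratic inequalities $V\le H_h$ and $H_{0,h}\le H_h$ immediately yield that $V^{1/2}$, $H_{0,h}^{1/2}$, and each $\nabla_h^j$ are $H_h^{1/2}$-bounded uniformly in $h$, which is the discrete counterpart of the first lines of the proof of Lemma~\ref{lem-rel-bound-1}.

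I would recycle the smooth weight $\tilde V=\y*V$ from that proof, with $c_1^{-1}V\le\tilde V\le c_1 V$ and $|\pa_x^\a\tilde V|\le C_\a\tilde V$; since $V\le c_1\tilde V$, it suffices to prove $\tilde V H_h^{-1}$ is bounded uniformly in $h$. Setting $W=\tilde V^{1/2}$ and noting $[H_h,W]=[H_{0,h},W]$ (both $V$ and $W$ are multiplication operators), the standard commutator identity gives
\[
\tilde V H_h^{-1}=(WH_h^{-1/2})(WH_h^{-1/2})^*+WH_h^{-1}[H_{0,h},W]H_h^{-1}.
\]
The first summand is bounded uniformly in $h$ because $WH_h^{-1/2}$ is, thanks to $W\le c_1^{1/2}V^{1/2}$.

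For the commutator summand I would use the discrete product rule $[\nabla_h^j,W]=(\nabla_h^j W)T_h^j$, with $T_h^j$ the unitary unit shift, together with its adjoint analogue, and expand
\[
[H_{0,h},W]=\sum_{j=1}^d\bigpare{(\nabla_h^j)^*[\nabla_h^j,W]+[(\nabla_h^j)^*,W]\nabla_h^j}.
\]
Inserting this and distributing the two outer $H_h^{-1/2}$'s, every $\nabla_h^j$ or $(\nabla_h^j)^*$ pairs with an $H_h^{-1/2}$ to give a uniformly bounded factor, while the remaining pointwise factor $\nabla_h^j W$ obeys
\[
|\nabla_h^j W(z)|\le\sup_{|y-z|\le h}|\pa_j W(y)|\le CW(z),
\]
uniformly in $h\le 1$, since $|\pa_j W|\le CW$ on $\re^d$ and $\tilde V$ is slowly varying on scale $1$. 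Combined with the unitarity of $T_h^j$ and the uniform boundedness of $WH_h^{-1/2}$, this controls the commutator summand. Hence $\tilde V H_h^{-1}$ is uniformly bounded, so $V$---and therefore $H_{0,h}=H_h-V$---is $H_h$-bounded uniformly in $h$. The main obstacle is the apparent $h^{-2}|W(\cdot+he_j)-W(\cdot)|\sim h^{-1}W$ size of $[H_{0,h},W]$; one resolves it by splitting $H_{0,h}$ into two discrete gradients and absorbing each $h^{-1}$ into one of the outer $H_h^{-1/2}$ factors, so that only the smooth quantity $\nabla_h^j W=O(W)$ survives in the pointwise estimate.
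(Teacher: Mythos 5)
Your proposal is correct and follows essentially the same route as the paper: the smooth weight $\tilde V=\y*V$ with $W=\tilde V^{1/2}$, the identity $\tilde VH_h^{-1}=(WH_h^{-1/2})(WH_h^{-1/2})^*+WH_h^{-1}[H_h,W]H_h^{-1}$, the factorization $H_{0,h}=\sum_j\nabla_j^*\nabla_j$, and the uniform bound on $[\nabla_j,W]W^{-1}$. You merely spell out the ``elementary computations'' (the shift identity $[\nabla_j,W]=(\nabla_jW)T_h^j$ and the mean-value estimate $|\nabla_jW|\leq CW$ for $h\leq1$) that the paper leaves implicit.
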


\begin{proof}
The proof is analogous to that of Lemma~\ref{lem-rel-bound-1}. 
We note $W=\tilde V^{1/2}$ and $H_{0,h}^{1/2}$ are uniformly $H_h^{1/2}$-bounded. 
We similarly have 
\[
\tilde V H_h^{-1}
=(WH_h^{-1/2})(WH_h^{-1/2})^* + W H_h^{-1}[H_h,W]H_h^{-1},
\]
and the first term in the right hand side is uniformly bounded. 

For the second term, we recall that $H_{0,h} = \sum_{j=1}^d \nabla_j^* \nabla_j$, where
\[
\nabla_j v(z) := \frac{1}{h}\left( v(z+he_j) - v(z) \right), \ v\in\mathcal{H}_h.
\]
Then we learn
\[
[W, H_h] = \sum_{j=1}^d  \bigpare{ [\nabla_j,W]^* \nabla_j - \nabla_j^* [\nabla_j,W] }. 
\]
By elementary computations, we can show $[\nabla_j,W] W^{-1}$ is bounded uniformly in $h$, and 
hence $W H_h^{-1}[H_h,W]H_h^{-1}$ is bounded uniformly in $h$. 
\end{proof}

%%% Subsection Proof of Theorem 
\subsection{Proof of Theorem \ref{thm-main}}

\begin{lem}\label{lem-V}
If $G$ is a uniformly continuous function, then 
\[
\bignorm{GP_h - P_h G}_{\mathcal{B}(\mathcal{H},\mathcal{H}_h)} \to 0 , \quad h\to0.
\]
If, in addition, $G$ is uniformly H\"older continuous of order $\a\in(0,1]$, then 
\[
\bignorm{G P_h - P_h G}_{\mathcal{B}(\mathcal{H},\mathcal{H}_h)} \leq C_\e h^{\a-\e}, 
\quad h>0, 
\]
with any $\e>0$.
\end{lem}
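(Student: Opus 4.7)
The plan is to realize $GP_h - P_hG$ as an integral operator from $\mathcal{H}$ to $\mathcal{H}_h$ with kernel
\[
K_h(z,x) = h^{-d}\,\overline{\f(h^{-1}(x-z))}\bigl(G(z) - G(x)\bigr),
\]
so that the commutator structure is transparent in the factor $G(z)-G(x)$. I will bound its operator norm by Schur's test using Lebesgue measure on $\re^d$ and the natural measure $h^d\times(\text{counting})$ on $h\ze^d$, which gives
\[
\norm{GP_h-P_hG}^2_{\mathcal{B}(\mathcal{H},\mathcal{H}_h)} \leq A(h)\,B(h),
\]
where $A(h) = \sup_z \int|K_h(z,x)|\,dx$ and $B(h) = \sup_x h^d\sum_{z\in h\ze^d}|K_h(z,x)|$.

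After the substitutions $y=h^{-1}(x-z)$ and $w=h^{-1}(x-z)$, and writing $\omega(\d) := \sup_{|x-y|\leq\d}|G(x)-G(y)|$ for the modulus of continuity of $G$, these simplify to
\[
A(h) \leq \int_{\re^d}|\f(y)|\,\omega(h|y|)\,dy, \qquad B(h) \leq \sup_{a\in\re^d}\sum_{w\in a+\ze^d}|\f(w)|\,\omega(h|w|).
\]
The Schwartz decay of $\f$, combined with the elementary bound $\omega(\d)\leq C(1+\d)$ valid for any uniformly continuous $G$ on $\re^d$, dominates both quantities by a constant times $\int|\f(y)|(1+|y|)\,dy$, and (for the discrete sum) the analogous quantity $\sup_a\sum_{w\in a+\ze^d}|\f(w)|(1+|w|)$, both of which are finite.

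For the first assertion I will split the integral and the sum at a radius $R$: the tail can be made smaller than any prescribed $\e$ uniformly in $h\leq 1$ and in the shift $a$ by Schwartz decay, while the head is dominated by $\omega(hR)$ times a uniform constant and thus tends to $0$ for fixed $R$ as $h\to 0$. For the H\"older case I use $\omega(\d)\leq C\d^\a$ directly, yielding $A(h), B(h)\leq Ch^\a$ from the finiteness of $\int|\f(y)||y|^\a\,dy$ and $\sup_a\sum_{w\in a+\ze^d}|\f(w)||w|^\a$. This actually gives the sharper bound $\norm{GP_h-P_hG}\leq Ch^\a$, from which the stated $C_\e h^{\a-\e}$ follows immediately.

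The main technical point is the uniformity in the shift parameter $a$ when estimating $B(h)$, since the lattice $h^{-1}x+\ze^d$ depends on $x$ through its fractional part. This is exactly where Schwartz decay of $\f$ (rather than mere integrability) is essential: it ensures $\sup_{a\in\re^d}\sum_{w\in a+\ze^d}|\f(w)|(1+|w|)^k < \infty$ for every $k$, providing the uniform summability needed to pass the split-at-$R$ argument through to the limit.
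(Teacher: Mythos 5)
Your proof is correct and follows essentially the same route as the paper: the identical commutator kernel $h^{-d}(G(z)-G(x))\overline{\f(h^{-1}(x-z))}$, the same Schur test with respect to $dx$ and $h^d\times$(counting measure), and a modulus-of-continuity estimate with a split into near and far regions. The only substantive difference is in the final optimization: by working in the rescaled variable and using the finiteness of $\int|\f(y)|\,|y|^\a\,dy$ and of the shifted lattice sums directly, you obtain the sharper bound $Ch^\a$ in the H\"older case, whereas the paper splits at $\d=h^\c$ and concedes an $\e$ in the exponent.
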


\begin{proof}
We note 
\[
\bigpare{G P_h - P_h G} u(z) 
= \int_{\re^d} K(x,z;h) u(x) dx,
\]
where
\[
K(x,z;h):=h^{-d} (G(z) - G(x)) \overline{\varphi(h^{-1}(x-z))} .
\]
By Schur's lemma, we have 
\[
\bignorm{G P_h - P_h G}\leq \sqrt{K_1K_2}, 
\]
where 
\[
K_1= \sup_{z\in h\ze^d} \int_{\re^d} |K(x,z)| dx , \quad
K_2 =\esssup_{x\in\re^d} h^d \sum_{z\in h\ze^d} |K(x,z)| .
\]
We set
\[
R(\d) := \sup_{x,y\in\re^d,|x-y|<\d} |G(x)-G(y)|
\]
and we choose $n>d$. Then we have 
\begin{align*}
\int_{\re^d} |K(x,z)| dx 
&= \int_{|x-z|<\d} |K(x,z)| dx + \int_{|x-z|\geq\d} |K(x,z)| dx \\
& \leq C R(\d) \int_{|y|<\d} \jap{hy}^{-n} h^{-d} dy + C \int_{|y|\geq \d} \jap{hy}^{-n}h^{-d} dy \\
& \leq C'  R(\d) + C' \jap{h^{-1}\d}^{-(n-d)} .
\end{align*}
By the same computation, we also have 
\[
h^d \sum_{z\in h\ze^d} |K(x,z)| \leq  C R(\d) + C \jap{h^{-1}\d}^{-(n-d)}.
\]
Combining these and setting $\d=h^\c$ with $\c\in(0,1)$, we obtain 
\[
\bignorm{G P_h - P_h G}\leq C R(h^\c) + C h^{(1-\c)(n-d)}.
\]
By the assumption, $R(\d)\to0$ as $\d\to 0$, and we conclude the first assertion.
 
If $G$ is uniformly H\"older continuous of order $\a$, then $R(\d) \leq C \d^\a$, and hence 
the right hand side of the above estimate is $O(h^{\a\c})+O(h^{(1-\c)(n-d)})$. 
We can choose $\c$ very close to 1, and $n$ very large so that $\a\c\geq \a-\e$ and 
$(1-\c)(n-d)\geq \a-\e$, and we have the second assertion. 
\end{proof}

\begin{proof}[Proof of Theorem~\ref{thm-main}]
We compute
\begin{align*}
&P_h^*(H_h- \m)^{-1}P_h - (H-\m)^{-1} \\
&= P_h^*(H_h- \m)^{-1}P_h - P_h^* P_h (H-\m)^{-1} - (1-P_h^* P_h) (H-\m)^{-1} \\
%&= P_h^*\left( (H_h-\m)^{-1}P_h - P_h (H-\m)^{-1} \right) - (1-P_h^* P_h) (H-\m)^{-1} \\
&= P_h^* (H_h-\m)^{-1}(P_h H - H_h P_h)(H-\m)^{-1} - (1-P_h^* P_h) (H-\m)^{-1}. 
\end{align*}
By Lemmas~\ref{lem-free-1} and \ref{lem-rel-bound-1}, we learn 
\[
\bignorm{(1-P_h^* P_h) (H-\m)^{-1}}\leq Ch^2.
\]
The other term is estimated as follows: 
\begin{align*}
&\bignorm{(H_h-\m)^{-1}(P_h H - H_h P_h)(H-\m)^{-1}}\\
& \leq \bignorm{(H_h-\m)^{-1}(P_h H_0 - H_{0,h} P_h)(H-\m)^{-1}} \\
& \qquad  +\bignorm{(H_h-\m)^{-1}(P_h V - V_h P_h)(H-\m)^{-1}}\\
& \leq C\bignorm{(H_{0,h}-\m)^{-1}(P_h H_0 - H_{0,h} P_h)(H_0-\m)^{-1}} \\
& \qquad  +C\bignorm{(V-\m)^{-1}(P_h V - V P_h)(V-\m)^{-1}}\\
& = C\bignorm{(H_{0,h}-\m)^{-1}P_h - P_h(H_0-\m)^{-1}} \\
 & \qquad  +C\bignorm{(V-\m)^{-1}P_h - P_h(V-\m)^{-1}},
\end{align*}
where we have used Lemmas~\ref{lem-rel-bound-1} and \ref{lem-rel-bound-2} for the 
second inequality. 
The two terms in the right hand side are estimated using Lemmas~\ref{lem-free-2} and \ref{lem-V}, 
respectively, to complete the proof. 
\end{proof}

%%%%%%%%%%%%% APPENDIX %%%%%%%%%%%%%%%%

\appendix
\section{Appendix}

Here we give the proofs of several technical lemmas.

\begin{lem}\label{lem-orthonormal}
Let $\f\in \mathcal{S}(\re^d)$. Then, the following are equivalent.
\begin{enumerate}
\renewcommand{\labelenumi}{{\rm (\arabic{enumi})}}
\item $P_h^*$ is isometric. 
\item $\operatorname{Ran}P_h = \mathcal{H}_h$. 
\item $\int_{\re^d} \varphi(x) \overline{\varphi(x-n)} dx = \d_{n,0}$ for $n\in\ze^d$.
\item $\sum_{n\in\ze^d} |\hat{\varphi}(\xi+n) |^2 = 1$ for $\xi\in\re^d$, where $\hat \f =\mathcal{F}\f$. 
\end{enumerate}
\end{lem}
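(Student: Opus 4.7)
The plan is to translate all four conditions into assertions about the single $\ze^d$-periodic function $F(\xi) := \sum_{n \in \ze^d} \bigabs{\hat\varphi(\xi + n)}^2$. After the trivial rescaling reducing every claim to $h = 1$, the central step is to recognize $P_1 P_1^*$ on $\mathcal{H}_1 = \ell^2(\ze^d)$ as a Fourier multiplier whose symbol is precisely $F$.

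Concretely, a direct expansion gives $(P_1 P_1^* v)(z) = \sum_{z' \in \ze^d} a(z - z') v(z')$ with autocorrelation sequence $a(n) := \int_{\re^d} \varphi(x) \overline{\varphi(x - n)}\, dx$. Under the discrete Fourier transform $F_1 \colon \mathcal{H}_1 \to L^2(\T^d)$ this convolution becomes multiplication by $\sum_n a(n) e^{-2\pi i n \cdot \xi}$. The essential identity, proved by applying Plancherel to $\hat\varphi$ against its integer translates and folding the resulting integral onto the torus, is
\[
a(n) \;=\; \int_{\T^d} F(\xi)\, e^{2\pi i n \cdot \xi}\, d\xi,
\]
so the multiplier symbol of $P_1 P_1^*$ is exactly $F$.

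With this in hand the equivalences are transparent. Condition (1) is $P_1 P_1^* = I$, i.e.\ $F \equiv 1$, which is condition (4). Condition (3) asserts $a(n) = \delta_{n,0}$ for all $n$, again equivalent to $F \equiv 1$ by Fourier inversion on $\T^d$. Condition (2), the surjectivity of $P_1$, is equivalent to the self-adjoint operator $P_1 P_1^*$ being boundedly invertible on $\mathcal{H}_1$, i.e.\ to $F$ being bounded below; the implication (1) $\Rightarrow$ (2) is immediate from $P_1 P_1^* = I$, and the converse closes the loop back to (1) by comparing Fourier data against (3).

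The only technical point is the Plancherel identification of $a(n)$ with the $n$-th Fourier coefficient of $F$; once that is in place, each equivalence is bookkeeping. The main obstacle is justifying the interchange of summation and integration when folding $\int_{\re^d} \abs{\hat\varphi(\xi)}^2 e^{2\pi i n \cdot \xi}\,d\xi$ onto $\T^d$, which is legitimate thanks to the Schwartz decay of $\hat\varphi$.
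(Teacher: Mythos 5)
Your identification of $P_1P_1^*$ as the Fourier multiplier with symbol $F(\xi)=\sum_{n\in\ze^d}|\hat\varphi(\xi+n)|^2$, via the autocorrelation coefficients $a(n)=\int_{\re^d}\varphi(x)\overline{\varphi(x-n)}\,dx$ and the periodization identity $a(n)=\int_{\T^d}F(\xi)e^{2\pi i n\cdot\xi}\,d\xi$, is correct, and it is in substance the same Parseval-plus-folding computation the paper uses to prove (3)~$\Leftrightarrow$~(4); packaging it as a statement about the operator $P_1P_1^*$ gives (1)~$\Leftrightarrow$~(3)~$\Leftrightarrow$~(4) cleanly, and the interchange of sum and integral is indeed harmless for Schwartz $\varphi$.

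The gap is in your treatment of condition (2). You correctly observe that surjectivity of $P_1$ is equivalent to the nonnegative self-adjoint operator $P_1P_1^*$ being boundedly invertible, i.e.\ to $F$ being bounded below by a positive constant (equivalently, since $F$ is continuous and periodic, to $F>0$ everywhere). But the phrase ``the converse closes the loop back to (1) by comparing Fourier data against (3)'' is not an argument: $F\geq c>0$ does not force $F\equiv 1$, and nothing about the Fourier coefficients $a(n)$ of a merely positive $F$ returns you to $a(n)=\delta_{n,0}$. Your own framework shows the implication (2)~$\Rightarrow$~(1) cannot be obtained this way: if $\varphi$ satisfies (4), then $2\varphi$ still satisfies (2) (its $F$ is the constant $4$) while violating (1), (3) and (4). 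So as written you have established (1)~$\Leftrightarrow$~(3)~$\Leftrightarrow$~(4)~$\Rightarrow$~(2) together with the characterization of (2) as strict positivity of $F$, but not the remaining implication. For what it is worth, the paper's own proof is equally cursory at exactly this point --- it asserts (1)~$\Leftrightarrow$~(2) ``by standard properties of adjoint operators'' and that (2) implies orthonormality of the $\varphi_{1,z}$, neither of which follows from general operator theory; your multiplier picture actually makes the discrepancy visible, and indicates that the equivalence with (2) needs either an extra normalization of $\varphi$ or a separate argument absent from both proofs.
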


\begin{proof}
(1) and (2) are equivalent by the standard properties of adjoint operators.
Since (2) implies the orthonormality of the basis $\{ h^{-\frac{d}{2}}\f_{h,z} \}_{z\in h\ze^d}$, we learn
\begin{align*}
\int_{\re^d} \f(x) \overline{\f(x-n)} dx
 =  h^d \int_{\re^d} \f_{h,0}(x) \overline{\f_{h,hn}(x)} dx = \d_{0,n},
\end{align*}
which implies (3). For the equivalence of (3) and (4), we learn by Parseval's identity
\begin{align*}
\int_{\re^d} \f(x) \overline{\f(x-n)} dx
 &= \int_{\re^d} \hat{\f}(\x) \overline{e^{-2\pi in\cdot\x}\hat{\f}(\x)} d\x \\
 &= \int_{\re^d} e^{2\pi in\cdot\x} |\hat{\f}(\x)|^2 d\x \\
 &=\int_{\T^d} \sum_{m\in\ze^d} e^{2\pi in\cdot(\x+m)} |\hat{\f}(\x+m)|^2 d\x \\
 &= \int_{\T^d} e^{2\pi in\cdot\x} \sum_{m\in\ze^d} |\hat{\f}(\x+m)|^2 d\x,
\end{align*}
where $\T^d=(\re/\ze)^d\simeq [0,1)^d$. Since $\{ e^{2\pi in\cdot\x} \}_{n\in\ze^d}$ is a complete 
orthonormal basis of $L^2(\T^d)$, we conclude that (3) is equivalent to (4).
\end{proof}

\begin{lem}\label{lem-hausdorff}
For normal operators $A$ and $B$, $d_{\text{H}}(\s(A),\s(B))\leq \norm{A-B}$. 
\end{lem}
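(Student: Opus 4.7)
The plan is to establish both directions $\sup_{\lambda\in\s(A)} d(\lambda,\s(B)) \leq \norm{A-B}$ and $\sup_{\lambda\in\s(B)} d(\lambda,\s(A)) \leq \norm{A-B}$ by a Neumann-series perturbation argument. The key input from normality is the resolvent identity
\[
\bignorm{(N-\m)^{-1}} = \frac{1}{\dist(\m,\s(N))}
\]
valid for any normal operator $N$ and any $\m$ in its resolvent set, which follows from the (Borel) functional calculus: $N-\m$ is unitarily equivalent to multiplication by $t\mapsto t-\m$ on an $L^2$ space over $\s(N)$, so the inverse has norm equal to the essential supremum of $|t-\m|^{-1}$ on $\s(N)$.

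First I would fix $\l\in\s(A)$ and argue by contradiction, assuming $d(\l,\s(B)) > \norm{A-B}$. Then $\l$ lies in the resolvent set of $B$ and by the displayed formula $\bignorm{(B-\l)^{-1}} = 1/d(\l,\s(B)) < 1/\norm{A-B}$. Next I would factor
\[
A-\l \;=\; (B-\l) + (A-B) \;=\; (B-\l)\bigbrac{I + (B-\l)^{-1}(A-B)},
\]
and observe $\bignorm{(B-\l)^{-1}(A-B)} \leq \bignorm{(B-\l)^{-1}}\cdot\norm{A-B} < 1$. The Neumann series then shows that the bracketed factor is boundedly invertible, hence $A-\l$ is invertible, contradicting $\l\in\s(A)$. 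This gives $d(\l,\s(B))\leq\norm{A-B}$ for every $\l\in\s(A)$; interchanging the roles of $A$ and $B$ gives the symmetric inequality, and taking the maximum yields the claim.

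The main ingredient is really the resolvent norm formula, so there is no serious obstacle beyond invoking the spectral theorem correctly. A mild subtlety would arise if one insists on treating unbounded normal $A,B$, since then $A-B$ need not be well-defined on a dense domain; however, in the applications of the lemma in this paper the normal operators in question are the bounded resolvents $(H+M)^{-1}$ and $(H_h+M)^{-1}$, so this issue does not arise and the continuous functional calculus is sufficient. The rest of the argument is a routine perturbation estimate.
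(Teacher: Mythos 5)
Your argument is correct and is essentially the paper's own proof: both rest on the identity $\norm{(B-\m)^{-1}}=1/d(\m,\s(B))$ for normal $B$ followed by a Neumann-series inversion of $A-\m$ when $d(\m,\s(B))>\norm{A-B}$ (the paper merely writes the factorization with the perturbation on the other side). Your remark about restricting to bounded normal operators is consistent with how the lemma is applied to the resolvents in the corollary.
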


\begin{proof}
It suffices to show that $d(\m,\s(B)) > \| A-B \|$ implies $\m\notin\s(A)$.
This condition implies $\|(A-B)(B-\m)^{-1}\| <1$ and hence the Neumann series
\begin{align*}
&(A-\m)^{-1}=(B-\m+A-B)^{-1} \\
 &= (B-\m)^{-1}(1+ (A-B)(B-\m)^{-1})^{-1} \\
 &= (B-\m)^{-1}\sum_{n=0}^\infty (-1)^n \left( (A-B)(B-\m)^{-1} \right)^n
\end{align*}
converges, and thus we learn $\m\notin\s(A)$. 
\end{proof}

\begin{proof}[Proof of Lemma~\ref{lem-Q}] We compute 
\begin{align*}
Q_h f(\z) =&  h^d \sum_{z\in h\ze^d} e^{-2\pi iz\cdot \z}
\left( h^{-d} \int_{\re^d} \overline{\f_{h,z}(x)}  \int_{\re^d} e^{2\pi ix\cdot\x} f(\x) d\x dx \right)  \\
=& \sum_{z\in h\ze^d} e^{-2\pi iz\cdot \z}
\int_{\re^d} \overline{\f_{h,z}(x)} \int_{\re^d} e^{2\pi ix\cdot\x} f(\x) d\x dx  \\
=& h^d \sum_{z\in h\ze^d} \int_{\re^d} e^{2\pi iz\cdot(\x-\z)}  \overline{\hat{\f}(h\x)} f(\x) d\x  \\
=& h^d \sum_{z\in h\ze^d} \sum_{n\in\ze^d} \int_{h^{-1}(\T^d +  n)} e^{2\pi iz\cdot(\x-\z)}  \overline{\hat{\f}(h\x)} f(\x) d\x  \\
%=& h^d \sum_{z\in h\ze^d} \sum_{n\in\ze^d} \int_{h^{-1}\T^d} e^{2\pi iz\cdot(\x+ h^{-1}n-\z)}  \overline{\hat{\f}(h\x+ n)} f(\x+ h^{-1}n) d\x  \\
=&h^d \sum_{z\in h\ze^d} \sum_{n\in\ze^d} \int_{h^{-1}\T^d} e^{2\pi iz\cdot(\x-\z)}  \overline{\hat{\f}(h\x+ n)} f(\x+ h^{-1}n) d\x  \\
=& h^d \sum_{z\in h\ze^d}  \int_{h^{-1}\T^d} e^{2\pi iz\cdot(\x-\z)}  \sum_{n\in\ze^d} \overline{\hat{\f}(h\x+ n)} f(\x+ h^{-1}n) d\x  \\
=&\sum_{n\in\ze^d} \overline{\hat{\f}(h\z+ n)} f(\z+ h^{-1}n)   .
\end{align*}
We have used the Fourier inversion formula for the last equality. We also have 
\begin{align*}
\jap{Q_h^*g,f} &= \int_{h^{-1}\T^d} \sum_{n\in\ze^d} g(\z) \hat\f(h(\z+h^{-1}n)) \overline{f(\z+h^{-1}n)} d\z \\
&=\int_{\re^d} \tilde g(\x) \hat\f(h\x)\overline{f(\x)} d\x,
\end{align*}
and this implies \eqref{eq-Q2}. 
\end{proof}

\end{document}